\newtheorem{Lemma}{Lemma}
\newtheorem{Theorem}{THEOREM}
\newtheorem{corollary}{Corollary}
\theoremstyle{definition}
\theoremstyle{definition}
\theoremstyle{definition}
\renewcommand\epsilon\varepsilon
\newcommand{\T}{\mathbb{T}}
\newcommand{\Z}{\mathbb{Z}}
\newcommand{\be}{\begin{equation}}
\newcommand{\ee}{\end{equation}}
\newcommand{\bea}{\begin{align}}
\newcommand{\eea}{\end{align}}
\newcommand\hv{\widehat{v}}
\begin{document}
\title[The Excitation Spectrum for Weakly Interacting Bosons]{The Excitation Spectrum for \\ Weakly Interacting Bosons}
\author{Robert Seiringer}
\thanks{\copyright\ 2010 by the author. This work may be
reproduced, in its entirety, for non-commercial purposes.}
\address{Department of Mathematics and Statistics, McGill University, Burnside Hall, 805 Sherbrooke Street West, 
Montreal, Quebec H3A 2K6,
Canada}
\email{rseiring@math.mcgill.ca}
\date{Sept. 17, 2010}
\begin{abstract} 
  We investigate the low energy excitation spectrum of a Bose gas with
  weak, long range repulsive interactions. In particular, we prove that
  the Bogoliubov spectrum of elementary excitations with linear
  dispersion relation for small momentum becomes exact in the
  mean-field limit.
\end{abstract}

\maketitle

\section{Introduction and Main Results}

Bogoliubov's seminal 1947 paper \cite{bogol} contains several important
results concerning the low energy behavior of bosonic
systems. Among its striking predictions is the fact that the excitation
spectrum is made up of elementary excitations whose energy is linear
in the momentum for small momentum. Bogoliubov's method is based on
various approximations and crucially uses a formalism on Fock space
that does not conserve particle number. Mathematically, the validity
of his method has so far only been established for the ground state
energy of certain systems, see \cite{LSol,LSol2,Sol,ESY,GSlhy,yauyin}. In
particular, there are no rigorous results on the low energy excitation spectrum
of interacting Bose gases, with the notable exception of exactly solvable
models in one dimension \cite{gir,LL,L,calogero,sutherland}.

In this article, we shall prove the validity of Bogoliubov's
approximation scheme for a Bose gas in arbitrary dimension in the
mean-field (Hartree) limit, where the interaction strength is proportional to
the inverse particle number, and its range extends over the whole
system. In particular, we verify that the low energy excitation
spectrum for such a system equals the sum of elementary
excitations, as predicted by Bogoliubov. As a corollary, we observe
that the lowest energy in the sector of total momentum $P$ depends
linearly on $|P|$, a property that is crucial for the superfluid
behavior of the system. The mean-field limit has served as a
convenient and instructive toy model for several aspects of bosonic systems over the
years. We refer to \cite{FL,FKS} for a review and further references.

We consider a homogeneous system of $N \geq 2$ bosons on the flat unit
torus $\T^d$, $d\geq 1$. The bosons interact with a weak two-body interaction
which we write for convenience as $(N-1)^{-1} v(x)$. We assume that
$v$ is positive, bounded, periodic (with period one), and
$v(-x)=v(x)$. We also assume that $v$ is of positive type, i.e., it
has only non-negative Fourier coefficients. With $\Delta$ denoting the
usual Laplacian on $\T^d$, the Hamiltonian equals
$$
H_N = - \sum_{i=1}^N \Delta_i + \frac 1{N-1} \sum_{i<j}  v(x_i-x_j)
$$
in suitable units. 
It acts on $L^2_{\rm sym}(\T^{dN})$, the permutation-symmetric square integrable functions of $N$ variables $x_i\in \T^d$. 
Let $E_0(N)$ denote the ground state energy of $H_N$. The Bogoliubov approximation \cite{bogol,LSSY} predicts that $E_0(N)$ is close to $\tfrac 12 N \hv(0) + E^{\rm Bog}$, where
\begin{equation}\label{defebn}
E^{\rm Bog}=  -  \frac 12 \sum_{p\neq 0} \left( |p|^2 + \hv(p) - \sqrt{|p|^4 + 2|p|^2 \hv(p)} \right) \,.
\end{equation}
The sum runs over  $p \in (2\pi \Z)^d$, and 
$$
\hv(p) = \int_{\T^d} v(x) e^{-ipx} dx
$$
are the Fourier coefficients of $v$. 
Note that the sum above converges, since the summands behave like $\hv(p)^2/|p|^2$ for large $p$. 

More importantly, the Bogoliubov approximation predicts that the excitation spectrum of $H_N$ is made up of elementary excitations of momentum $p$ with corresponding energy
\begin{equation}\label{defep}
e_p = \sqrt{|p|^4 + 2 |p|^2 \hv(p)}\,.
\end{equation}
One noteworthy feature of (\ref{defep}) is that it is linear in $p$ for small $p$, in contrast to the case when interactions are absent.

Our main results can be summarized as follows. 

\begin{Theorem}\label{thm}
 The ground state energy $E_0(N)$ of $H_N$ equals
\begin{equation}\label{gse}
E_0(N) = \frac N 2 \hv(0) + E^{\rm Bog} + O(N^{-1/2})\,,
\end{equation}
with $E^{\rm Bog}$ defined in (\ref{defebn}). 
Moreover, the spectrum of $H_N- E_0(N)$ below an  energy $\xi$ is equal to finite sums of the form 
\begin{equation}\label{evf}
\sum_{p \in (2\pi\Z)^d \setminus\{0\}} e_p\, n_p  + O\left( \xi^{3/2} N^{-1/2}\right)\,, 
\end{equation}
where $e_p$ is given in (\ref{defep}) and $n_p \in \{0,1,2,\dots\}$
for all $p\neq 0$. 
\end{Theorem}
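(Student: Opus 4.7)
The plan is to pass to second quantization in momentum space, extract a quadratic Hamiltonian on the Fock space of excited modes, diagonalize it by a Bogoliubov rotation, and show that its spectrum controls the low-lying spectrum of $H_N$ up to errors of the stated size. Writing
\begin{equation*}
H_N = \sum_p |p|^2 a_p^* a_p + \frac{1}{2(N-1)}\sum_{p,q,k}\hv(k)\, a^*_{p+k} a^*_{q-k} a_q a_p,
\end{equation*}
and using $a_0^* a_0 = N - \mathcal{N}_+$ with $\mathcal{N}_+ = \sum_{p\neq 0} a_p^* a_p$, one separates $H_N$ into (i) the constant $\tfrac{N}{2}\hv(0)$, (ii) the quadratic Bogoliubov form
\begin{equation*}
\mathcal{H}_{\mathrm{Bog}} = \sum_{p\neq 0}\bigl(|p|^2 + \hv(p)\bigr) a_p^* a_p + \tfrac12\sum_{p\neq 0}\hv(p)\bigl(a_p^* a_{-p}^* + a_p a_{-p}\bigr),
\end{equation*}
and (iii) a remainder consisting of cubic and quartic expressions in the excited-mode operators, together with the finite-$N$ corrections from replacing $N-1$ by $N$ and $a_0^* a_0$ by $N$. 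The point is that on low-energy states the depletion $\mathcal{N}_+$ is of order one, and the remainder contributes only $O(N^{-1/2})$.

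Next I would diagonalize $\mathcal{H}_{\mathrm{Bog}}$ by a Bogoliubov rotation $a_p = \cosh(\al_p)\, b_p - \sinh(\al_p)\, b_{-p}^*$ with $\tanh(2\al_p) = \hv(p)/(|p|^2 + \hv(p))$, giving $\mathcal{H}_{\mathrm{Bog}} = E^{\mathrm{Bog}} + \sum_{p\neq 0} e_p\, b_p^* b_p$ on the Fock space over $\ell^2((2\pi\Z)^d\setminus\{0\})$. For the upper bound, I would use the Bogoliubov vacuum, projected onto the $N$-particle sector, as a trial state for $E_0(N)$, and states $\prod_p (b_p^*)^{n_p}|\Omega\ket$, similarly projected, as trials for the excited eigenvalues; the projection costs only $O(N^{-1/2})$ because the particle-number distribution of these states is concentrated on scales of order one relative to $N$. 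For the lower bound I would first establish the a priori estimate $\bra\mathcal{N}_+^k\ket \leq C_k(1+\xi)^k$ on any state with energy at most $E_0(N)+\xi$, using positivity of $v$ and the spectral gap $|p|^2 \geq 4\pi^2$ of the Laplacian on $\T^d$. Combining this with Cauchy-Schwarz bounds on the cubic and quartic remainders yields $H_N \geq \tfrac{N}{2}\hv(0) + \mathcal{H}_{\mathrm{Bog}} - C(1+\xi)^{3/2} N^{-1/2}$ on the low-energy subspace, and the min-max principle then produces the matching lower bound on the eigenvalues in (\ref{evf}).

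The principal difficulty lies in the lower bound, specifically in controlling the cubic terms arising from the $a_0, a_0^*$ substitution, schematically $N^{-1/2}\sum \hv(k)\, a^*_{p+k} a^*_{-k} a_p + \mathrm{h.c.}$, which naively have operator norm of order $N^{1/2}$ and are tamed only via the a priori smallness of $\mathcal{N}_+$ together with a Cauchy--Schwarz estimate that absorbs them into a small fraction of the positive form $\sum_{p\neq 0}|p|^2 a_p^* a_p$. A secondary difficulty is that the Bogoliubov transformation naturally acts on full bosonic Fock space but must be compared with $H_N$ on $L^2_{\rm sym}(\T^{dN})$; the associated unitary only approximately preserves particle number, and keeping the sharp $N^{-1/2}$ rate requires careful tracking of the resulting finite-$N$ discrepancies, both when projecting trial states onto the $N$-particle sector and when translating the operator inequality into a spectral statement via min-max.
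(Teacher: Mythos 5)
Your outline reproduces the broad skeleton of the actual argument (isolate the quadratic Bogoliubov form, prove a priori bounds on the depletion, control the remainder by Cauchy--Schwarz, conclude by min-max), but the step you label a ``secondary difficulty'' is in fact the crux, and your plan does not bridge it. The diagonalization $\mathcal{H}_{\mathrm{Bog}} = E^{\rm Bog} + \sum_{p\neq 0} e_p\, b_p^\dagger b_p$ is exact only on the full Fock space, via a unitary that does not commute with the particle number; consequently, after your lower-bound inequality $H_N \geq \tfrac N2 \hv(0) + \mathcal{H}_{\mathrm{Bog}} - C(1+\xi)^{3/2}N^{-1/2}$ on the low-energy $N$-particle subspace, min-max does not yet yield eigenvalues of the form $\sum_p e_p n_p$: you would need to know the spectrum of $\mathcal{H}_{\mathrm{Bog}}$ \emph{restricted to the $N$-particle sector}, and there the quoted diagonalization fails to be exact and must itself be quantified. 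Likewise, for the upper bound, asserting that projecting the squeezed states $\prod_p (b_p^\dagger)^{n_p}|\Omega\rangle$ onto the $N$-particle sector ``costs only $O(N^{-1/2})$'' is not justified uniformly in the excitation energy: to reach the claimed $O(\xi^{3/2}N^{-1/2})$ error for $\xi$ as large as a power of $N$, the particle-number fluctuations of these states (which grow with $\xi$) must be tracked, and no mechanism for this is given. Also, the a priori bound $\langle \mathcal{N}_+^k\rangle \leq C_k(1+\xi)^k$ for all $k$ is asserted rather than proved, though only low moments are really needed.

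The paper resolves precisely this particle-number issue, and does so differently from what you sketch: it never leaves the $N$-particle sector. One replaces $a_p$, $p\neq 0$, by the number-conserving operators $b_p = a_p a_0^\dagger/\sqrt{N-1}$ and defines the Bogoliubov Hamiltonian in terms of them; the deviation from the canonical commutation relations, $[b_p,b_p^\dagger]=(a_0^\dagger a_0 - a_p^\dagger a_p)/(N-1)$, is then an explicit, controllable error. The diagonalization is implemented by the number-conserving unitary $U=e^X$ with $X=\sum_{p\neq 0}\beta_p(b_p^\dagger b_{-p}^\dagger - b_p b_{-p})$, whose action on $a_q$ is computed by a second-order Taylor expansion; the remainders are controlled by Gr\"onwall-type bounds on $e^{-sX}(N^>+1)^2e^{sX}$ and $U(T+1)^2U^\dagger$, and the error terms are estimated using only $\langle N^>\rangle$ and $\langle N^> T\rangle$ in the low-energy subspace (Lemmas~\ref{lem1} and~\ref{lem:apr}), not all moments. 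Two further differences: the paper never produces explicit cubic terms to be absorbed into the kinetic energy; instead the Cauchy--Schwarz step is performed at the level of the two-body potential with the projections $P,Q$ (inequality (\ref{s1})), the cross terms being absorbed with weight $\epsilon$ into $(P\otimes Q+Q\otimes P)v(P\otimes Q+Q\otimes P)$ and with weight $\epsilon^{-1}$ into the quartic part, which is where positivity of $v$ and of $\hv$ enters. And the upper bound is not a trial-state computation but an operator inequality for $UH_NU^\dagger$, handled by the same machinery. Without a number-conserving implementation of this kind (or an equivalently quantitative c-number substitution argument), your proposal has a genuine gap exactly where the content of the theorem lies.
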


The error term $O(N^{-1/2})$ in (\ref{gse})  refers to an expression that is bounded,
in absolute value, by a constant times $N^{-1/2}$ for large $N$, where
the constant depends only on the interaction potential $v$; likewise
for the error term $O(\xi^{3/2}N^{-1/2})$ in (\ref{evf}). The
dependence on $v$ is rather complicated but can be deduced from our
proof, which gives explicit bounds. Keeping track of this dependence
allows to draw conclusions about the spectrum for large $N$ even if $v$ depends on
$N$.

In the case of a fixed, $N$-independent $v$, Theorem~\ref{thm} implies
that the Bogoliubov approximation becomes exact in the mean-field (Hartree) 
limit. As long as $\xi \ll N^{1/3}$, each individual excitation energy
$\xi$ is of the form $ \sum_p e_p\, n_p$ with error $o(1)$. Moreover,
as long as $\xi \ll N$, it is of the form $\sum e_p\,n_p (1+o(1))$,
i.e., the error is small relative to the magnitude of the excitation
energy. In other words, in the mean field limit the whole excitation
spectrum with energy $\xi\ll N$ is given in terms of sums of
elementary excitations. The condition $\xi\ll N$ can be expected to be
optimal, since only under this condition a large fraction of the
particles are guaranteed to occupy the zero momentum mode, one of the
key assumptions in the Bogoliubov approximation. For excitation
energies of the order $N$ and larger, the spectrum will not be
composed of elementary excitations anymore but has a more complicated
structure.

Our proof  shows that for each value of the $\{n_p\}$
there exists exactly one eigenvalue of the form (\ref{evf}). Moreover,
the eigenfunction corresponding to an eigenvalue with given $\{n_p\}$
has total momentum $\sum_{p} p\, n_p$. Given this fact we readily
deduce the following corollary from Theorem~\ref{thm}.

\begin{corollary}
Let $E_P(N)$ denote the ground state energy of $H_N$ in the sector of total momentum $P$. We have
\begin{equation}\label{epd}
E_P(N) - E_0(N) = \min_{\{n_p\},\, \sum_p p\, n_p = P} \sum_{p\neq 0} e_p\, n_p + O\left(|P|^{3/2}N^{-1/2}\right)\,. 
\end{equation}
In particular, $E_P(N) - E_0(N) \geq |P| \min_{p} \sqrt{2 \hv(p) + |p|^2} + O(|P|^{3/2}N^{-1/2})$.
\end{corollary}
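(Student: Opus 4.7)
The plan is to read off the corollary from Theorem~\ref{thm} together with the two structural refinements stated just before it: that each multi-index $\{n_p\}$ labels exactly one eigenvalue in the expansion (\ref{evf}), and that the corresponding eigenfunction lies in the sector of total momentum $\sum_p p\, n_p$. Since $H_N$ commutes with the total momentum operator, the spectrum decomposes over momentum sectors; the first step is therefore simply to restrict the minimum in (\ref{evf}) to configurations satisfying $\sum_p p\, n_p = P$, which produces the form of (\ref{epd}) up to a choice of cutoff $\xi$ hidden in the error term.

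Next, to justify that this cutoff can be taken of size $|P|$, giving exactly the stated error $O(|P|^{3/2} N^{-1/2})$, I would produce a cheap trial configuration. Any lattice point $P \in (2\pi \Z)^d$ can be written as a sum of $m \leq C|P|$ vectors drawn from a fixed finite generating set of the lattice (for example the $\pm 2\pi e_i$), so taking $n_p$ to record the multiplicities in this decomposition gives an admissible $\{n_p\}$ with $\sum_p e_p\, n_p \leq C'|P|$, where $C'$ depends only on $v$. This upper bound on the minimum allows (\ref{evf}) to be applied with $\xi = C'|P|$, producing precisely the error in (\ref{epd}).

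For the ``in particular'' lower bound, I would bound each $e_p$ below by $|p|\, c_v$ with $c_v := \min_{q \neq 0} \sqrt{|q|^2 + 2\hv(q)}$, and combine this with the triangle inequality $\sum_p |p|\, n_p \geq |\sum_p p\, n_p| = |P|$ to obtain $\sum_p e_p\, n_p \geq c_v |P|$ for every admissible $\{n_p\}$; substituting into the first part of the corollary gives the stated bound.

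The hard part, to the extent there is one, is entirely in Theorem~\ref{thm} and its refinements; once the one-to-one correspondence between low-energy eigenvalues and configurations $\{n_p\}$ together with its momentum labeling is granted, the deduction is essentially bookkeeping, the only slightly non-obvious quantitative step being the trial decomposition used to justify $\xi \sim |P|$ rather than something larger like $|P|^2$.
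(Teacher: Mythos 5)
Your argument is correct and is essentially the deduction the paper intends: the corollary is stated as an immediate consequence of Theorem~\ref{thm} together with the remarks that each $\{n_p\}$ labels exactly one eigenvalue whose eigenfunction has momentum $\sum_p p\,n_p$, which is exactly what you use. Your added bookkeeping (the lattice-generator trial configuration justifying $\xi\sim|P|$, and the bound $e_p\geq|p|\min_q\sqrt{|q|^2+2\hv(q)}$ combined with the triangle inequality) is precisely the routine detail the paper leaves implicit.
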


The linear dependence of $E_P(N)$ on $|P|$ is of crucial importance
for the superfluid behavior of the Bose gas; see, e.g., the detailed
discussion in \cite{CDZ}.  It is a result of the interactions among
the particles. The expression (\ref{epd}) differs markedly from the
corresponding result for an ideal, non-interacting gas, especially if
$\hv(0)$ is large.

Finally, we note that under the unitary transformation $U=\exp(-i q
\sum_{j=1}^N x_j)$, $q\in (2\pi\Z)^d$, the Hamiltonian $H_N$
transforms as
$$
U^\dagger H_N U = H_N + N |q|^2 - 2 q  P\,,
$$
where $P = -i \sum_{j=1}^N \nabla_j$ denotes the total momentum
operator. Hence our results apply equally also to the parts 
of the spectrum of $H_N$ with excitation energies close to $N|q|^2$,
corresponding to collective excitations where the particles move
uniformly with momentum $q$.

The remainder of this paper is devoted to the proof of
Theorem~\ref{thm}. The main strategy is to compare $H_N$ with
Bogoliubov's approximate Hamiltonian. The latter has to be suitably
modified to take particle number conservation into account.

\section{Preliminaries}

We denote by $P$ the projection onto the constant function in $L^2(\T^d)$, and $Q=1-P$. The operator that counts the number of particles outside the zero momentum mode will be denoted by $N^>$, i.e.,
$$
N^> = \sum_{i=1}^N Q_i\,.
$$
We shall also use the symbol $T$ for the kinetic energy $ T = - \sum_{i=1}^N \Delta_i$. 

Lemma~\ref{lem1} below gives simple upper and lower bounds on the ground state energy $E_0(N)$ of $H_N$, as well as an upper bound on the expectation value of $T$ in a low energy state.

\begin{Lemma}\label{lem1}
The ground state energy of $H_N$ satisfies the bounds
\begin{equation}\label{sib}
0 \geq  E_0(N)  - \frac N 2 \hv(0) \geq -  \frac {N}{2(N-1)} \left( v(0) - \hv(0)\right) \,.
\end{equation}
Moreover, in any $N$-particle state $\Psi$ with $\langle \Psi|H_N|\Psi\rangle \leq \frac N2 \hv(0) + \mu$ we have 
\begin{equation}\label{aprn1}
(2\pi)^2 \left\langle \Psi | N^> | \Psi\right\rangle \leq \left\langle \Psi \left |  T  \right|\Psi\right\rangle \leq  \frac {N}{2(N-1)}\left( v(0) -\hv(0)\right) + \mu \,.
\end{equation}
\end{Lemma}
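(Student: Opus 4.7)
The plan is to reduce both bounds in Lemma~\ref{lem1} to two elementary ingredients: the positive-type hypothesis on $v$, and the spectral gap of $-\Delta$ on the unit torus.

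For the upper bound on $E_0(N)$, I would test with the constant function $\Psi_0\equiv 1$ on $\T^{dN}$, which is normalized and symmetric. Its kinetic energy vanishes, and a direct integration gives $\langle\Psi_0|v(x_i-x_j)|\Psi_0\rangle = \hv(0)$ for each pair, so that $\langle\Psi_0|H_N|\Psi_0\rangle = \tfrac{1}{N-1}\binom{N}{2}\hv(0) = \tfrac{N}{2}\hv(0)$. For the matching lower bound, the key step is a pointwise operator bound on the interaction. Writing $v(x) = \sum_p \hv(p) e^{ipx}$ and using $\hv(p)\geq 0$ for every $p$,
\[
\sum_{1\leq i,j \leq N} v(x_i - x_j) \;=\; \sum_p \hv(p)\,\Bigl|\sum_{i=1}^N e^{ip x_i}\Bigr|^2 \;\geq\; \hv(0)\, N^2.
\]
Subtracting the diagonal contribution $Nv(0)$, dividing by $2(N-1)$, and using $T\geq 0$ then yields
\[
H_N \;\geq\; \frac{N}{2}\hv(0) \,-\, \frac{N}{2(N-1)}\bigl(v(0) - \hv(0)\bigr),
\]
which is the second inequality in (\ref{sib}).

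For (\ref{aprn1}), the upper bound on $\langle T\rangle$ follows by writing $\langle T\rangle = \langle H_N\rangle - \langle V\rangle$ with $V = (N-1)^{-1}\sum_{i<j}v(x_i-x_j)$, and combining the hypothesis $\langle H_N\rangle\leq \tfrac{N}{2}\hv(0) + \mu$ with the operator lower bound on $V$ just established. The lower bound $(2\pi)^2 \langle N^>\rangle \leq \langle T\rangle$ is a direct consequence of the one-particle spectral gap on the unit torus: the nonzero eigenvalues of $-\Delta$ are $|p|^2$ with $p \in (2\pi\Z)^d\setminus\{0\}$, all of which are $\geq (2\pi)^2$, so $-\Delta_i \geq (2\pi)^2 Q_i$ as operators on the $i$th factor. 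Summing over $i$ yields $T\geq (2\pi)^2 N^>$.

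There is no substantial obstacle here; the only conceptual point worth flagging is that the positive-type hypothesis is precisely what converts the interaction term into a usable operator lower bound, and that the correction $-Nv(0)/(2(N-1))$ originates from removing the spurious diagonal ($i=j$) contributions introduced when one passes from $\sum_{i<j}$ to the full Fourier square $\sum_{i,j}$.
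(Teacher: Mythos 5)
Your proposal is correct and follows essentially the same route as the paper: a constant trial function for the upper bound, the positive-type inequality $\sum_p \hv(p)\bigl|\sum_i e^{ipx_i}\bigr|^2 \geq \hv(0)N^2$ (equivalently, dropping the $p\neq 0$ terms) to bound the interaction below after removing the diagonal $Nv(0)$, and the spectral gap $-\Delta \geq (2\pi)^2 Q$ on the torus for $(2\pi)^2 N^> \leq T$. The only difference is cosmetic: the paper states the positivity of the $p\neq 0$ part of the Fourier sum and rewrites it, while you keep only the $p=0$ term of the full sum, which is the identical estimate.
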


\begin{proof}
The upper bound to the ground state energy follows from using  a constant trial function. 
Since $\hv\geq 0$, we have
$$
\sum_{p\in (2\pi\Z)^d\setminus\{0\}} \hv(p) \left| \sum_{j=1}^N e^{ipx_j}\right|^2 \geq 0\,.
$$
This inequality can be rewritten as 
\begin{equation}\label{pdv}
 \sum_{1\leq i<j\leq N} v(x_i-x_j) \geq \frac{N^2}{2} \hv(0)  - \frac N 2 v(0) 
\end{equation}
and thus
$$
H_N \geq \frac N2 \hv(0) + T  - \frac{N}{2(N-1)}\left( v(0) -\hv(0)\right)\,.
$$
The rest follows easily.
\end{proof}

In the following, we shall also need a bound on the expectation value of higher powers of $N^>$. More precisely, we shall use the following lemma.

\begin{Lemma}\label{lem:apr}
  Let $\Psi$ be an $N$-particle wave function in the spectral subspace
  of $H_N$ corresponding to energy $E \leq E_0(N) + \mu$. Then
\begin{align*}
(2\pi)^2 \left\langle \Psi \left| N^> T \right|\Psi\right\rangle  & \leq  \left(v(0)+ \frac \mu 2 \right)^2 \\ & \quad + \frac{N}{2(N-1)} \left( \mu + 3 v(0) + \hv(0)\right)\left( 2\mu + v(0) -\hv(0)\right) \,.
\end{align*}
\end{Lemma}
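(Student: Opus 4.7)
The plan is to reduce the bound on $\langle N^> T\rangle$ to one on $\langle T^2\rangle$ via an operator inequality, and then to control $\|T\Psi\|$ using the spectral subspace condition together with Lemma~\ref{lem1}.

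The first step is an operator inequality. Since every $p \in (2\pi\Z)^d\setminus\{0\}$ satisfies $|p|^2\geq(2\pi)^2$, I have $-\Delta_i\geq(2\pi)^2 Q_i$ on $L^2(\T^d)$, and summing in $i$ gives $T\geq(2\pi)^2 N^>$. Because $T$ and $N^>$ are simultaneously diagonal in the Fourier basis of $\T^{dN}$ they commute, so multiplying by $T\geq 0$ preserves the inequality and yields the operator bound $(2\pi)^2 T N^> \leq T^2$. Taking expectation in $\Psi$ reduces the lemma to proving the stated bound with $(2\pi)^2\langle N^>T\rangle$ replaced by $\|T\Psi\|^2$.

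The second step uses the identity
\[
T\Psi = \bigl(H_N - E_0 - \tfrac{\mu}{2}\bigr)\Psi - \bigl(V - E_0 - \tfrac{\mu}{2}\bigr)\Psi .
\]
Since $\Psi$ lies in the spectral subspace of $H_N$ for the interval $[E_0,E_0+\mu]$, the operator $H_N-E_0-\mu/2$ has spectrum in $[-\mu/2,\mu/2]$ there, so $\|(H_N-E_0-\mu/2)\Psi\|\leq\mu/2$. The triangle inequality then gives $\|T\Psi\|\leq \mu/2 + \|(V-E_0-\mu/2)\Psi\|$.

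The central task is then to bound $\|(V-E_0-\mu/2)\Psi\|$ uniformly in $N$. For this I would use the positive-type representation
\[
V-\tfrac{N}{2}\hv(0) = -\tfrac{N(v(0)-\hv(0))}{2(N-1)} + \tfrac{1}{2(N-1)}\sum_{p\neq 0}\hv(p)\Bigl|\sum_{j=1}^N e^{ipx_j}\Bigr|^2,
\]
which displays $V-\tfrac{N}{2}\hv(0)$ as a small negative scalar plus a manifestly positive operator (using $\hv(p)\geq 0$), together with the pointwise bound $v\leq v(0)$. Expanding
\[
\|(V-E_0-\tfrac{\mu}{2})\Psi\|^2 = \langle V^2\rangle - 2(E_0+\tfrac{\mu}{2})\langle V\rangle + (E_0+\tfrac{\mu}{2})^2
\]
and invoking Lemma~\ref{lem1} to control $\langle V\rangle$, $\langle T\rangle$ and $\langle N^>\rangle$, I expect to obtain $\|(V-E_0-\mu/2)\Psi\|\leq v(0)+\text{corrections}$. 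Adding $\mu/2$ and squaring then produces the first summand $(v(0)+\mu/2)^2$, while the remaining product $\frac{N}{2(N-1)}(\mu+3v(0)+\hv(0))(2\mu+v(0)-\hv(0))$ arises from collecting cross terms, with each factor having the shape of the Lemma~\ref{lem1} bound on $\langle T\rangle$.

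\textbf{The main obstacle} is precisely this uniform-in-$N$ bound on $\|(V-E_0-\mu/2)\Psi\|$: a naive operator-norm estimate on $V$ gives only $O(N^2)$. The uniform estimate requires the delicate cancellation between $\langle V^2\rangle$ and $(E_0+\mu/2)^2$, which is possible only because $\langle V\rangle$ is within $O(1)$ of $\tfrac{N}{2}\hv(0)$ on low-energy states (by Lemma~\ref{lem1}) and the fluctuations of $V$ about this mean are controlled by the positive-type structure of $\hv$ together with the pointwise bound $v\leq v(0)$.
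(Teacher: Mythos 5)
There is a genuine gap at precisely the step you flag as the main obstacle. Your opening reduction $(2\pi)^2 N^>T\le T^2$ is legitimate (the two operators commute), and the spectral-subspace bound $\|(H_N-E_0-\tfrac\mu2)\Psi\|\le\tfrac\mu2$ is correct, but the reduction replaces the quantity to be bounded by the strictly stronger one $\|T\Psi\|^2=\langle T^2\rangle$, which by your own triangle inequality is equivalent (up to $\mu$) to a uniform-in-$N$ bound on $\|(V-E_0-\tfrac\mu2)\Psi\|$, i.e.\ on the second moment of $V$. The tools you invoke to close this --- the positive-type decomposition of $V-\tfrac N2\hv(0)$, the pointwise bound $v\le v(0)$, and Lemma~\ref{lem1} --- control only \emph{first} moments ($\langle V\rangle$, $\langle T\rangle$, $\langle N^>\rangle$). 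They give $\langle V\rangle=\tfrac N2\hv(0)+O(1)$, but say nothing about $\langle V^2\rangle$ beyond an $O(N)$ bound, since the positive part $W=\tfrac1{2(N-1)}\sum_{p\neq0}\hv(p)\bigl|\sum_je^{ipx_j}\bigr|^2$ has operator norm of order $N$ while only $\langle W\rangle=O(1)$ is known. To show that the fluctuations of $V$ (equivalently of $W$) are $O(1)$ on low-energy states one is forced to expand $W\Psi$ in second-quantized form, and the resulting terms are controlled by second-moment quantities such as $\langle (N^>)^2\rangle$ or $\langle N^>T\rangle$ --- exactly what the lemma is supposed to establish, so the argument as sketched is circular; nothing in your outline supplies the missing estimate, and indeed $\langle T^2\rangle=O(1)$ is a stronger assertion than the lemma itself.

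The paper avoids ever needing a second moment of $V$. By permutation symmetry it writes $\langle\Psi|N^>T|\Psi\rangle = N\langle\Psi|Q_1S|\Psi\rangle+\langle\Psi|N^>(H_N-E_0-\tfrac\mu2)|\Psi\rangle$ with $S=E_0+\tfrac\mu2-V$; the second term is handled by your spectral-subspace idea via Schwarz, costing $\tfrac\mu2\langle(N^>)^2\rangle^{1/2}$, while in the first term only a single projection $Q_1$ sits next to $S$: the part $S_a$ of $S$ not involving $x_1$ is bounded above by a constant using positivity of $\hv$ as in (\ref{pdv}), and the single remaining pair term $S_b$ is treated by Schwarz with $P_2,Q_2$ insertions. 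This yields $\langle N^>T\rangle\le a\,\langle N^>\rangle+b\,\langle(N^>)^2\rangle^{1/2}$, and since $\langle(N^>)^2\rangle\le(2\pi)^{-2}\langle N^>T\rangle$ the inequality is self-consistent and closes using only the first-moment input of Lemma~\ref{lem1}. To salvage your route you would need an independent proof that the variance of $V$ (or $\langle T^2\rangle$) is $O(1)$ on the low-energy spectral subspace, which is not available at this stage.
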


In particular, $\langle\Psi| N^> T|\Psi\rangle$  is bounded above by an expression depending only on $\mu$ and $v$ but not on $N$.

\begin{proof}
Since $\Psi$ is permutation symmetric, 
$$
\left\langle \Psi \left| N^> T \right|\Psi\right\rangle =  N \left\langle \Psi \left|  Q_1   S \right| \Psi\right\rangle + \left\langle \Psi \left| N^> \left(H_N - E_0(N) - \tfrac 12 \mu\right)\right|\Psi\right\rangle
$$
where $S= E_0(N) + \tfrac 12 \mu  - (N-1)^{-1} \sum_{i<j} v(x_i-x_j)$. Using Schwarz's inequality, the last term can be bounded as 
$$
 \left\langle \Psi \left| N^> \left(H_N - E_0(N) - \tfrac 12 \mu\right)\right|\Psi\right\rangle \leq  \frac \mu 2 \left\langle \Psi \left| (N^>)^2 \right|\Psi\right\rangle^{1/2}\,.
$$

We split $S$ into two parts,  $S= S_a + S_b$, with
$$
S_a =  E_0(N) + \frac \mu 2  - \frac 1{N-1} \sum_{2\leq i<j\leq N} v(x_i-x_j)
$$
and 
$$
S_{b} =   - \frac 1{N-1} \sum_{j=2}^N v(x_1-x_j) \,.
$$
Note that $S_a$  does not depend on $x_1$. Using the positivity of $\hv(p)$ as in (\ref{pdv}), but with $N$ replaced by $N-1$, as well as the upper bound on $E_0(N)$ in (\ref{sib}), we see that
$$
S_{a} \leq \tfrac 12 \left( \mu +   \hv(0) +   v(0)\right) \,.
$$
In particular, this implies that
$$
 N \left\langle \Psi \left|  Q_1   S_a \right| \Psi\right\rangle  \leq \tfrac 12 \left( \mu +  \hv(0) +   v(0) \right)  \left\langle \Psi \left| N^> \right| \Psi\right\rangle\,.
$$
To bound the contribution of $S_b$, we use
\begin{align*}
- \langle \Psi| Q_1 S_b |\Psi\rangle =  \left\langle \Psi \left|  Q_1  v(x_1-x_2)  \right| \Psi\right\rangle & =  \left\langle \Psi \left|  Q_1 Q_2  v(x_1-x_2)  \right| \Psi\right\rangle \\ & \quad +  \left\langle \Psi \left|  Q_1  P_2  v(x_1-x_2) P_2 \right| \Psi\right\rangle \\ & \quad +\left\langle \Psi \left|  Q_1 P_2  v(x_1-x_2) Q_2 \right| \Psi\right\rangle \,.
\end{align*}
The second term on the right side is positive. For the first and the third, we use Schwarz's inequality and $\|v\|_\infty = v(0)$  to conclude
$$
\langle\Psi | Q_1 S_b |\Psi\rangle \leq v(0) \langle \Psi | Q_1 Q_2 |\Psi\rangle^{1/2}   +v(0) \langle \Psi | Q_1 | \Psi\rangle\,.
$$

Since 
$$
N^2 \langle \Psi| Q_1 Q_2 |\Psi\rangle \leq \langle \Psi |(N^>)^2 |\Psi\rangle
$$
we have thus shown that 
\begin{align*}
 \left\langle \Psi \left| N^> T \right|\Psi\right\rangle  & \leq \tfrac 12 \left( \mu +  \hv(0) + 3 v(0) \right) \left\langle \Psi \left| N^> \right|\Psi\right\rangle  \\ & \quad + \left( v(0) + \tfrac 12 \mu \right) \left\langle \Psi \left| (N^>)^2  \right|\Psi\right\rangle^{1/2} \,.
\end{align*}
Using $N^>\leq (2\pi)^{-2} T$ this yields
$$
 \left\langle \Psi \left| N^> T \right|\Psi\right\rangle \leq \left(\frac{v(0)+ \tfrac 12 \mu}{2\pi}\right)^2 + \left( \mu + 3 v(0) + \hv(0)\right)  \left\langle \Psi \left| N^> \right|\Psi\right\rangle\,.
$$
The result then follows from Lemma~\ref{lem1}.
\end{proof}

\section{The Bogoliubov Hamiltonian}

The main strategy in the proof of Theorem~\ref{thm} is to compare the Hamiltonian $H_N$ with the Bogoliubov Hamiltonian. We will use a slightly modified version of it, which is, in particular, particle number conserving. 

Let $a_p$ and $a_p^\dagger$ denote the usual creation and annihilation operators on Fock space for a particle with momentum $p\in (2\pi \Z)^d$, satisfying the canonical commutation relations $[a_p,a^\dagger_q]=\delta_{pq}$. 
For $p\neq 0$, let 
\begin{equation}\label{defb}
b_p = \frac{a_p a^\dagger_0}{\sqrt{N-1}} 
\end{equation}
and define the Bogoliubov Hamiltonian
\begin{equation}\label{bogham}
H^{\rm Bog} =  \sum_{p\neq 0} \left[ |p|^2 \,b^\dagger_p b_p  + \frac 12   \hv(p) \left( 2  b^\dagger_p b_p  + b^\dagger_p b^\dagger_{-p} + b_{p} b_{-p}  \right) \right]\,.
\end{equation}
Note that $H^{\rm Bog}$ conserves particle number. 
We are interested in $H^{\rm Bog}$ in the sector of exactly $N$ particles. Note also that $\hv(-p)=\hv(p)$ for all $p\in (2\pi \Z)^d$. 

Let 
$$
A_p =  |p|^2 + \hv(p)
$$
and 
$$
B_p =  \hv(p)\,.
$$
A simple computation (compare with \cite[Thm.~6.3]{LSol}) shows that 
\begin{align*}
& A_p \left( b^\dagger_p b_p + b^\dagger_{-p}b_{-p} \right) + B_p \left( b^\dagger_p b^\dagger_{-p} + b_p b_{-p}\right) \\ & =   \sqrt{A_p^2 - B_p^2} \left( \frac{ \left( b^\dagger_p + \alpha_p b_{-p}\right)\left( b_p + \alpha_p b^\dagger_{-p} \right) }{1-\alpha_p^2}+ \frac{ \left( b^\dagger_{-p} + \alpha_p b_{p}\right)\left( b_{-p} + \alpha_p b^\dagger_{p} \right) }{1-\alpha_p^2} \right)\\ & \quad - \frac 12 \left( A_p - \sqrt{A_p^2 - B_p^2} \right) \left( [b_p,b^\dagger_p] + [b_{-p},b^\dagger_{-p}]\right) \,,
\end{align*}
where
$$
\alpha_p = \frac {1}{B_p}\left( A_p -\sqrt{A_p^2 - B_p^2}\right) \quad \text{if $B_p>0$}\ , \quad \alpha_p=0 \quad \text{if $B_p=0$.}
$$ 
Note that $0\leq \alpha_p \leq \hv(p)/(|p|^2 + \hv(p))$. In particular,  $\sup_{p\neq 0} \alpha_p < 1$, and $\alpha(p) \sim \hv(p)/|p|^2$ for large $p$.

Define further 
$$
c_p = \frac{ b_p + \alpha_p b^\dagger_{-p}} {\sqrt{1-\alpha_p^2}} 
$$
for $p\neq 0$. 
What the above calculation shows is that 
\begin{equation}\label{hb1}
H^{\rm Bog} =  -  \frac 12 \sum_{p\neq 0} \left( A_p - \sqrt{A_p^2 - B_p^2} \right) \frac {[b_p,b^\dagger_p] + [b_{-p},b^\dagger_{-p}]}{2} +  \sum_{p\neq 0} e_p\, c^\dagger_p c_p
\end{equation}
with $e_p$ defined in (\ref{defep}). 
The commutators equal
\begin{equation}\label{hb2}
[b_p,b^\dagger_p] = \frac{a^\dagger_0 a_0  - a_p^\dagger a_p}{N-1} \leq \frac{N}{N-1}\,.
\end{equation}

\section{Proof of Theorem~\ref{thm}: Lower Bound}

Recall that $P$ denotes the projection onto the constant function in $L^2(\T^d)$, and $Q=1-P$. Denote the two-particle multiplication operator $v(x_1-x_2)$ by $v$ for short. Using translation invariance and the Schwarz inequality
\begin{multline*}
(P\otimes Q + Q\otimes P) v Q\otimes Q + Q \otimes Q v (P\otimes Q + Q \otimes P) \\ \geq - \epsilon (P\otimes Q + Q\otimes P)v(P\otimes Q + Q\otimes P) - \epsilon^{-1} Q\otimes Q v Q \otimes Q
\end{multline*}
(which follows from positivity of $v$) we conclude that 
\begin{multline}\label{s1}
v \geq P\otimes P v P\otimes P + P \otimes P v Q\otimes Q + Q\otimes Q v P\otimes P \\
+ (1-\epsilon) (P\otimes Q + Q\otimes P)v(P\otimes Q + Q\otimes P) - \epsilon^{-1} Q\otimes Q v Q \otimes Q 
\end{multline}
for any $\epsilon>0$. The last term can be bounded by $v(0) Q\otimes Q$. In second quantized language, this means that $H_N$ is bounded from below by the restriction of 
\begin{multline*}
\sum_p |p|^2 a^\dagger_p a_p + \frac {\hv(0)}{2 (N-1)} \left( N(N-1) - 2\epsilon (N-N^>)N^> -  N^>\left( N^>-1 \right) \right) \\ + \sum_{p\neq 0} \frac{ \hv(p)}{2} \left( 2 (1-\epsilon) b^\dagger_p b_p  + b^\dagger_p b^\dagger_{-p} + b_{p} b_{-p}  \right) - \frac {N^>(N^>-1) v(0)}{2 \epsilon (N-1)}
\end{multline*}
to the $N$-particle sector. Here, we use again the definition (\ref{defb}) of $b_p$. From now on, we shall work with operators on Fock space, but it is always understood that we are only concerned with the sector of $N$ particles.

Next, we observe that 
$$
a^\dagger_p a_p \geq \frac {N-1}{N} b^\dagger_p b_p\,.
$$
Moreover,
$$
\sum_{p\neq 0} \hv(p) b^\dagger_p b_p \leq \hv(0) \frac{N}{N-1} N^>\,,
$$
and hence 
$$
H_N \geq \frac N 2 \hv(0) + H^{\rm Bog} - E_\epsilon
$$
where the Bogoliubov Hamiltonian $H^{\rm Bog}$ was defined in (\ref{bogham}) and 
\begin{equation}\label{defee}
E_\epsilon = \frac 1{N-1} T  + \frac {N^>(N^>-1)}{2 (N-1)}\left( \hv(0)  + \frac{v(0)}{\epsilon}\right) + \epsilon \hv(0) \frac{2N-1}{N-1} N^> \,.
\end{equation}
In particular, using (\ref{hb1}) and (\ref{hb2}),
\begin{equation}\label{last}
H_N + E_\epsilon \geq  \frac N2 \hv(0) + \frac{N}{N-1} E^{\rm Bog} + \sum_{p\neq 0} e_p \, c^\dagger_p c_p\,,
\end{equation}
where $E^{\rm Bog}$ is the Bogoliubov energy defined in (\ref{defebn}).

The last term on the right side of (\ref{last}) is positive and can be
dropped for a lower bound on the ground state energy of $H_N$. For the
choice $\epsilon=O(N^{-1/2})$, the expected value of $E_\epsilon$ in
the ground state of $H_N$ is bounded above by $O(N^{-1/2})$, as the
bounds in Lemma~\ref{lem1} and~\ref{lem:apr} show. This proves the
desired lower bound on $E_0(N)$.

To obtain lower bounds on excited eigenvalues, it remains to investigate the positive last term in (\ref{last}). We do this via a unitary transformation.
Let $U=e^{X}$, where 
$$
X = \sum_{p\neq 0} \beta_p\left(b^\dagger_p b^\dagger_{-p}- b_{p}b_{-p}\right)
$$
with $\beta_p\geq 0$ determined by 
$$
\tanh(2 \beta_p) = \alpha_p\,.
$$
Note that $X$ is anti-hermitian and hence $U$ is unitary. If $a_0$ and
$a_0^\dagger$ were replaced by $\sqrt{N-1}$, $U$ would be the usual
Bogoliubov transformation. Our modified $U$ has the advantage of being
particle number conserving, however. The price to pay for this
modification is that $U^\dagger a_q U$ can not be calculated
anymore so easily.

A second order Taylor expansion yields
$$
e^{-tX} a_q e^{tX} = a_q - t [X,a_q] + \int_{0}^t (t-s) e^{-sX} [X,[X,a_q]] e^{sX} ds
$$
for any $t>0$.
We compute 
$$
[X,a_q]= - \frac 2 {N-1} \beta_q a_{-q}^\dagger  a_0^2
$$
and
\begin{align}\nonumber 
[X,[X,a_q]] &= 4 \beta_q^2 a_q \frac{a_0^2 (a_0^\dagger)^2}{(N-1)^2} - 4 \beta_q \left(\sum_{p\neq 0} \beta_p a_p a_{-p}\right) a_{-q}^\dagger \frac{2a^\dagger_0a_0+1}{(N-1)^2} \\ & =: 4 \beta_q^2 a_q + J_q \,. \label{defj}
\end{align}
For any $t>0$ we thus have
$$
e^{-t X} a_q e^{tX} = a_q +  \frac{2 t}{N-1}  \beta_q a_{-q}^\dagger  a_0^2 + \int_{0}^t (t-s) e^{-sX} \left( 4 \beta_q^2 a_q + J_q \right) e^{s X} ds\,.
$$
Iterating this identity leads to 
$$
U^\dagger a_q U  = \cosh(2\beta_q) a_q + \sinh(2\beta_q) a_{-q}^\dagger \frac{a_0^2}{N-1}+ K_q   =: d_q + K_q
$$
with 
$$
K_q =  \int_0^1 e^{-s X} J_q e^{sX} \frac{\sinh(2\beta_q (1-s))}{2\beta_q} ds\,.
$$

In particular, we see that 
\begin{align}\nonumber
U^\dagger a^\dagger_q a_qU & = d^\dagger_q d_q + K^\dagger_q K_q + d^\dagger_q K_q  + K^\dagger_q d_q
\\ & \leq (1+\lambda) d^\dagger_qd_q + (1+\lambda^{-1}) K^\dagger_q K_q \label{defl}
\end{align}
for any $\lambda>0$.  We further have  
\begin{align}\nonumber
 d_p^\dagger d_p & =  c^\dagger_p c_p   - \frac{a^\dagger_p a_p}{1-\alpha_p^2} \left( \frac{a_0a_0^\dagger}{N-1}-1 \right) - \frac{\alpha_p^2\, a_{-p}a^\dagger_{-p}}{1-\alpha_p^2} \left(  \frac{ a_0^\dagger a_0}{N-1}-\frac{ (a_0^\dagger)^2 a_0^2 }{(N-1)^2}\right) 
\\ & \leq c^\dagger_p c_p + \frac{a^\dagger_p a_p}{1-\alpha_p^2} \frac{N^>}{N} \,. \label{y2}
\end{align}
Using Schwarz's inequality, we can bound $K^\dagger_q K_q$ as  
\begin{equation}\label{comb1}
K^\dagger_q K_q \leq \frac{\cosh(2\beta_q) -1}{(2\beta_q)^2}\int_0^1 e^{-sX} J_q^\dagger J_q e^{sX} \frac{\sinh(2\beta_q (1-s))}{2\beta_q} ds\,.
\end{equation}
To get an upper bound on $J^\dagger_q J_q$, we write $J_q$ as the sum of two terms, $J_q^{(1)}+J_q^{(2)}$, where $J_q^{(2)}$ is the second term on the right side in the first line of (\ref{defj}), and $J^{(1)}_q = 4\beta_q^2 a_q((N-1)^{-2} a_0^2 (a_0^\dagger)^2-1)$. Using $J_q^\dagger J_q \leq 2 J_q^{(1)\dagger} J_q^{(1)} + 2 J_q^{(2)\dagger} J_q^{(2)}$ as well as
$$
\left(\sum_{p\neq 0} \beta_p a^\dagger_p a^\dagger_{-p}\right) \left(\sum_{q\neq 0} \beta_q a_q a_{-q}\right) \leq \left(\sum_{p\neq 0} \beta_p^2\right) N^>(N^>-1)\,,
$$
we obtain the bound 
\begin{equation}\label{comb2}
J^\dagger_q J_q  \leq  C_1 \beta_q^2 \frac{(N^>+1)^2}{N-1} \quad , \quad C_1 = 64 \frac N{N-1} \left( \sup_{q\neq 0} \beta_q^2 +  \sum_{p\neq 0} \beta_p^2\right)\,.   
\end{equation}

To proceed, we need an upper bound on $e^{-sX} (N^>+1)^2 e^{sX}$ for $0\leq s\leq 1$. For this purpose, let us  compute
$$
[X,N^> ] = -2 \sum_{q\neq 0} \beta_q \left( b^\dagger_q b^\dagger_{-q} + b_q b_{-q}\right)\,.
$$
We have
\begin{align}\nonumber
[X,N^>]^2 & \leq 8 \left(\sum_{p\neq 0} \beta_p b^\dagger_p b^\dagger_{-p} \right) \left(\sum_{q\neq 0} \beta_q b_q b_{-q}\right) + 8 \left(\sum_{p\neq 0} \beta_p b_p b_{-p} \right) \left(\sum_{q\neq 0} \beta_q b^\dagger_q b^\dagger_{-q}\right) \\  \nonumber & \leq \frac 8{(N-1)^2} \left(\sum_p \beta_p^2\right) \big[ N^>(N^>-1)(N+1-N^>)(N+2-N^>) \\ \nonumber & \qquad \qquad \qquad \qquad  \qquad + (N^>+1)(N^>+2)(N-N^>)(N-1-N^>)\big] \\ & \leq  16 \left(\sum_{p} \beta_p^2\right) \frac{N}{N-1} \left(N^>+1\right)^2\,. \label{xt}
\end{align}
With the aid of  Schwarz's inequality, we obtain
\begin{align*}
[X,(N^>+1)^2] &= (N^>+1) [ X, N^>] + [X,N^>](N^>+1) \\ & \geq -\eta (N^>+1)^2 - \eta^{-1} [X,N^>]^2
\end{align*}
for any $\eta>0$. In particular,
$$
[X,(N^>+1)^2] \geq - C_2 (N^>+1)^2
$$
with
$$
C_2 = 8 \sqrt{\frac N{N-1}\sum\nolimits_{q\neq 0} \beta_q^2}\,.
$$
We conclude that 
$$
e^{-tX} (N^>+1)^2 e^{tX} \leq (N^>+1)^2  + C_2 \int_0^t e^{-sX} (N^>+1)^2 e^{sX} ds
$$
for any $t>0$. 
Iterating this bound gives
\begin{equation}\label{n2b}
e^{-tX}(N^>+1)^2e^{tX} \leq e^{tC_2} (N^>+1)^2 \,.
\end{equation}
In combination, (\ref{comb1}), (\ref{comb2}) and (\ref{n2b}) yield the bound
\begin{equation}\label{y1}
K_q^\dagger K_q \leq  \frac{(N^>+1)^2}{N-1} C_1 e^{C_2}   \frac{(\cosh(2\beta_q)-1)^2}{16\beta_q^2}  \,.
\end{equation}

By combining (\ref{defl}) with (\ref{y2}) and (\ref{y1}), we obtain
\begin{align}\nonumber
\sum_p e_p\,  c_p^\dagger c_p & \geq  \frac 1{1+\lambda}  U^\dagger \left( \sum_p e_p\, a^\dagger_p a_p\right)  U- \frac{N^> T}{N}  \left( \sup_{p\neq 0} \frac{e_p}{|p|^2 (1-\alpha_p^2)}\right) \\ & \quad - (1+\lambda^{-1}) \frac{(N^>+1)^2}{N-1} C_1 e^{C_2} \left(\sum_q e_q  \frac{(\cosh(2\beta_q)-1)^2}{16\beta_q^2} \right)\,. \label{defll}
\end{align} 
Note that $|q|^2 \beta_q^2 \sim \hv(q)^2/|q|^2$ for large $q$, hence the last sum is finite.
Applying this bound to (\ref{last}), we have thus shown that 
$$
H_N + \widetilde E_{\epsilon,\lambda} \geq \frac N2 \hv(0) +  \frac{N}{N-1}E^{\rm Bog}+  \frac 1{1+\lambda} U^\dagger\left( \sum_{p} e_p\, a^\dagger_p a_p \right) U
$$
where
\begin{align*}
\widetilde E_{\epsilon,\lambda} & = E_\epsilon +  \frac{N^> T}{N}  \left( \sup_{p\neq 0} \frac{e_p}{|p|^2 (1-\alpha_p^2)}\right)  \\ & \quad + (1+\lambda^{-1}) \frac{(N^>+1)^2}{N-1} C_1 e^{C_2} \left(\sum_q e_q  \frac{(\cosh(2\beta_q)-1)^2}{16\beta_q^2} \right)\,,
\end{align*}
with $E_\epsilon$ defined in (\ref{defee}). 

The desired lower bound now follows easily from the min-max principle,
and the fact that the spectrum of $\sum_p e_p a^\dagger_p a_p$ equals
$\sum_p e_p n_p$, with $n_p \in \{0,1,2,\dots\}$ for all $p\in
(2\pi\Z)^d$. In fact, for any function $\Psi$ in the spectral subspace of $H_N$
corresponding to energy $E\leq E_0(N) + \xi$, we have
$$
\langle \Psi| \widetilde E_{\epsilon,\lambda} |\Psi\rangle \leq O\left( \left( \epsilon+N^{-1}\right)\xi +\xi^2 N^{-1} \left( \epsilon^{-1} + \lambda^{-1} \right)\right)
$$
according to Lemmas~\ref{lem1} and~\ref{lem:apr}. 
The choice $\epsilon = O( \sqrt{\xi/N}) =\lambda$ then leads to the conclusion that the spectrum $H_N$ below an energy $E_0(N)+\xi$ is bounded from  below  by 
the corresponding spectrum of 
$$
\frac N 2 \hv(0) + E^{\rm Bog} + \sum_{p\neq 0} e_p \, a^\dagger_p a_p  - O\left(\xi^{3/2}N^{-1/2}\right)\,.
$$
This completes the proof of the lower bound. 

\section{Proof of Theorem~\ref{thm}: Upper Bound}

We proceed in essentially the same way as in the lower bound. In analogy to (\ref{s1}), we have
\begin{multline*}
v \leq P\otimes P v P\otimes P + P \otimes P v Q\otimes Q + Q\otimes Q v P\otimes P \\ + (1+\epsilon) (P\otimes Q + Q\otimes P)v(P\otimes Q + Q\otimes P) + (1+\epsilon^{-1}) Q\otimes Q v Q \otimes Q
\end{multline*}
for any $\epsilon>0$. Together with
$$
b^\dagger_p b_p \geq a^\dagger_p a_p \left( 1- \frac {N^>}N\right)
$$
this implies the upper bound
$$
H_N \leq \frac N2 \hv(0) + H^{\rm Bog} + F_\epsilon
$$
where
\begin{equation}\label{fe}
F_\epsilon = \frac {N^>}{N} T  + \epsilon \hv(0) \frac{2N-1}{N-1} N^> + (1+\epsilon^{-1}) \frac {N^>(N^>-1) v(0)}{2 (N-1)} \,.
\end{equation}

From a lower bound to the commutator (\ref{hb2}), namely
$$
[b_p,b^\dagger_p] + [b_{-p},b^\dagger_{-p}] = \frac{2 a^\dagger_0 a_0  -  a^\dagger_p a_p -  a^\dagger_{-p}a_{-p}}{N-1} \geq 2 - \frac{3 N^>}{N-1} \,,
$$
we get 
\begin{equation}\label{u1}
H^{\rm Bog} \leq   E^{\rm Bog}\left( 1- \frac{3 N^>}{2(N-1)}\right)  + \sum_{p\neq 0} e_p\, c^\dagger_p c_p \,.
\end{equation}
Finally, to investigate the last term on the right side of this expression, we proceed as in (\ref{defl})--(\ref{defll}), with the obvious modifications to get an upper bound instead of a lower bound. In replacement of (\ref{y2}) we use 
$$
 d_p^\dagger d_p  \geq  c^\dagger_p c_p   - \frac 1{N-1} \frac{a^\dagger_p a_p}{1-\alpha_p^2} - \frac{\alpha_p^2}{1-\alpha_p^2} \frac {N^>(N^>+1)}{N-1}\,.
$$
The result is
\begin{align}\nonumber
\sum_p e_p\,  c_p^\dagger c_p & \leq  \frac 1{1-\lambda}  U^\dagger \left( \sum_p e_p\, a^\dagger_p a_p\right)  U + \frac{N^>(N^>+1)}{N-1}  \left( \sum_{p\neq 0} \frac{e_p \alpha_p^2}{1-\alpha_p^2}\right)\\ \nonumber & \quad + \frac  T{N-1}  \left(\sup_{p\neq 0} \frac {e_p}{|p|^2 (1-\alpha_p^2)}\right) \\  & \quad + \lambda^{-1} \frac{(N^>+1)^2}{N-1} C_1 e^{C_2}\left(\sum_q e_q  \frac{(\cosh(2\beta_q)-1)^2}{16\beta_q^2} \right) \label{u2}
\end{align}
for any $\lambda>0$.

Altogether, this shows that 
$$
H_N \leq \frac N2 \hv(0) + E^{\rm Bog} + \frac 1{1-\lambda}  U^\dagger \left( \sum\nolimits_{p} e_p\, a^\dagger_p a_p\right) U + \widetilde F_{\epsilon,\lambda}\,,
$$
with $\widetilde F_{\epsilon,\lambda}$ given by the sum of $F_\epsilon$ in (\ref{fe}), $\tfrac 32 N^> (N-1)^{-1} |E^{\rm Bog}|$ from (\ref{u1}) and the last three terms in (\ref{u2}). To complete the upper bound, we need a bound on $U \widetilde F_{\epsilon,\lambda} U^\dagger$. For this purpose, we find it convenient to first bound $\widetilde F_{\epsilon,\lambda}$ by
$$
F_{\epsilon,\lambda}\leq C_3\left( \left(\epsilon^{-1}+\lambda^{-1}\right) \frac{(T+1)^2}{N} + \left(\epsilon+N^{-1}\right) T \right)
$$
for an appropriate constant $C_3>0$. 
What remains to be shown is that
\begin{equation}\label{t2b}
U (T+1)^2 U^\dagger \leq e^{C_4} (T+1)^2
\end{equation}
for some constant $C_4>0$. 
Given (\ref{t2b}), we obtain
\begin{align}\nonumber
U H_N U^\dagger & \leq \frac N2 \hv(0) + E^{\rm Bog} + \frac{1}{1-\lambda} \sum_{p\neq 0} e_p\, a^\dagger_p a_p \\ & \quad +  C_3e^{C_4} \left( \left(\epsilon^{-1}+\lambda^{-1}\right) \frac{(T+1)^2}{N} + \left(\epsilon+N^{-1}\right) T \right) \,. \label{finu}
\end{align}
The spectrum of the operator on the right side of this inequality has exactly the desired
form. Given an eigenvalue of $\sum_{p\neq 0}e_p\, a^\dagger_p a_p$
with value $\xi$, we choose $\epsilon = O(\sqrt{\xi/N}) = \lambda$ to
obtain $\frac N 2 \hv(0)+ E^{\rm Bog} + \xi + O(\xi^{3/2} N^{-1/2})$ for the right
side of (\ref{finu}). This gives the desired upper bound.

It remains to prove (\ref{t2b}).  This can be done in essentially the
same way as in the proof of (\ref{n2b}). In fact,
$$
[X, T] = - 2 \sum_q \beta_q |q|^2 \left(b^\dagger_q b^\dagger_{-q} +b_q b_{-q}\right)
$$
and hence, similarly to (\ref{xt}),
$$
[X,T]^2 \leq  16 \left(\sum_{q}|q|^4 \beta_q^2\right) \frac{N}{N-1} \left(N^>+1\right)^2\,.
$$
Recall that $\beta_q^2 \sim \hv(q)^2/|q|^4$ for large $q$, which implies the finiteness of the sum since $v$ is bounded by assumption (and hence, in particular, square integrable on $\T^d$).  
By Schwarz,
\begin{align*}
[X,(T+1)^2]  \leq \eta (T+1)^2 + \eta^{-1} [X,T]^2
\end{align*}
for any $\eta>0$. In particular,
$$
[X,(T+1)^2] \leq C_4 (T+1)^2
$$
for some $C_4>0$. We conclude that 
$$
e^{tX} (T+1)^2 e^{-tX} \leq (T+1)^2  + C_4 \int_0^t e^{sX} (T+1)^2 e^{-sX} ds
$$
for any $t>0$. 
Iterating this bound yields (\ref{t2b}). This completes the proof of the upper bound, and hence the proof of Theorem~\ref{thm}.

\bigskip {\it Acknowledgments.} It is a pleasure to thank
J. Fr\"ohlich for inspiring discussions and for drawing my attention
towards studying the mean-field limit. Financial support by the
U.S. National Science Foundation grant PHY-0845292 is gratefully
acknowledged.

\end{document}